\documentclass[ 11pt,reqno ]{amsart}

\AddToHook{bfseries}{\boldmath} 

\usepackage[letterpaper,twoside=false,scale=0.8 ]{geometry}

\usepackage[ english ]{babel}
\usepackage{microtype}

\usepackage{graphicx}
\usepackage[dvipsnames]{xcolor}

\usepackage{amsthm}
\usepackage{mathtools}

\usepackage[shortlabels,inline]{enumitem}
\usepackage[T1]{fontenc}
\usepackage{libertinus}
\usepackage{libertinust1math}
\usepackage[ scr=stixtwofancy,bb=stixtwo,cal=stixtwoplain ]{mathalpha}

\usepackage{subcaption}

\usepackage[ bookmarksopen,bookmarksdepth=3,colorlinks=true,urlcolor=NavyBlue,citecolor=ForestGreen,linkcolor=BrickRed,hypertexnames=false ]{hyperref}

\theoremstyle{plain}
\newtheorem{theorem}{Theorem}[section]

\newtheorem{corollary}[theorem]{Corollary}
\newtheorem{lemma}[theorem]{Lemma}
\newtheorem*{mainresult*}{Main Result}

\theoremstyle{remark}
\newtheorem{remark}[theorem]{Remark}

\theoremstyle{definition}

\usepackage{csquotes} 
\usepackage[ bibstyle=ieee,citestyle=numeric-comp,dashed=true,sorting=nty,backref=true,maxnames=7 ]{biblatex} 
\DeclareNumChars*{+} 

\usepackage[ capitalize,noabbrev,nameinlink ]{cleveref}
\crefname{mainresult*}{Main Result}{Main Results}

\usepackage{pgf}
\usepackage{tikzexternal}

\usepackage{tikzscale}
\tikzsetexternalprefix{tikz/}
\DeclareMathOperator{\res}{res}
\DeclareMathOperator{\sgn}{sgn}

\DeclarePairedDelimiter{\parn}{\lparen}{\rparen}
\DeclarePairedDelimiter{\brak}{\lbrack}{\rbrack}
\DeclarePairedDelimiter{\set}{\lbrace}{\rbrace}

\DeclarePairedDelimiter{\case}{\lbrace}{.}

\newcommand{\mb}{\mathbb} 

\newcommand{\ceq}{\coloneq} 
\newcommand{\eqc}{\eqcolon} 
\newcommand{\mo}{\mathring}

\newcommand{\R}{\mb{R}}

\numberwithin{equation}{section}

\title[Two-layer conjugate flows in an open channel]
{Two-layer conjugate flows in an open channel}

\author[Ming Chen]{Robin Ming Chen}
\address{Department of Mathematics, University of Pittsburgh, PA 15260.}
\email{mingchen@pitt.edu}

\author[Kristoffer Varholm]{Kristoffer Varholm}
\address{Department of Mathematics, University of Pittsburgh, PA 15260.}
\email{kristoffer.varholm@pitt.edu}

\author[Annie G. Wang]{Annie Grace Wang}
\address{Department of Mathematics, University of Pittsburgh, PA 15260.}
\email{agw23@pitt.edu}

\begin{document}
\begin{abstract}
    We investigate conjugate flows in a two-layer fluid system with a flat bottom and a free surface. These are up- and downstream states consistent with simultaneous conservation of mass, momentum, and energy; and therefore candidate end-states of one-dimensional dissipationless bores. By choosing variables in a specific way, we are able to \emph{explicitly} parameterize \emph{all} physical non-trivial conjugate flows in the system, using only rational functions. This two-parameter family can be categorized according to whether the states are compatible with %
    \begin{enumerate*}[1)]
        \item a rising surface, and falling internal interface; or
        \item vice versa
    \end{enumerate*}%
    . In particular, we rule out the existence of \emph{monotone} bore solutions, consistent with an earlier result due to \textcite{chen_16_stratified}. However, here we do so already at the level of conjugate flows. Separating these two distinct categories is a one-parameter family of critical trivial conjugate flows. The complete description obtained in the present paper serves as an important stepping stone towards finding actual realizing heteroclinic orbits connecting such conjugate states.
\end{abstract}
\maketitle

\section{Introduction}

Internal bores and hydraulic jumps in stratified fluids are robust manifestations of nonlinear wave dynamics, arising naturally in a range of geophysical and engineering contexts. They frequently appear when tidal currents interact with bottom topography, producing abrupt transitions in density stratification that propagate with near-permanent form. Such events have been observed in coastal oceans, estuaries, and laboratory experiments, with amplitudes that can span the entire depth of the water column~\cite{helfrich_06_long}. Because these waves modulate vertical mixing, nutrient transport, and even acoustic propagation~\cite{frank_05_experimental, whalen_20_internal}, understanding the conditions under which they form and persist remains an important problem.

A classical way to predict whether two uniform states can be connected by a dissipationless bore is through \emph{conjugate flow analysis}, first systematically developed for layered and continuously stratified fluids by \textcite{benjamin_66_internal, benjamin_71_unified}. In Benjamin's framework, the up- and downstream states are required to be consistent with conservation of mass, momentum (through ``flow force''), and energy. This is, of course, a necessary condition for the existence of such a bore. Since energy is conserved, conjugate flows simultaneously govern the broadening, or table-top, limit of large-amplitude solitary waves~\cite{turner_88_broadening}. It is worth noting that ``conjugate flow'' sometimes also refers more broadly to the energy-dissipative case.

The conjugate flow problem, and beyond, has since been studied more closely in a variety of stratified settings. For a two-layer fluid in a \emph{closed} channel, the problem is by now well understood, up to and including a global existence theory for internal gravity waves~\cite{amick_86_global, chen_26_global, chen_25_extreme}. For the natural extension to three layers,~\cite{lamb_00_conjugate} explores conjugate flows, while~\cite{rusaas_02_solitary} computes solitary waves numerically. Similarly,~\cite{lamb_98_conjugate} considers continuously stratified fluids. There is also a large array of results for model equations, for instance, a recent program to classify solitary waves in a system that retains the exact conjugate flows of the Euler equations~\cite{doak_22_internal, barros_26_internal}.

In the present work, we study conjugate flows for a two-layer fluid under a free surface, departing from the rigid-lid assumption above. While a rigid-lid may be satisfactory in many settings, it fails to capture the coupling between internal and surface waves occurring in, for instance, tidal bores. The question of conjugate flows has previously been examined most closely by \textcite[Appendix A]{dias_01_interfacial}, in connection with a model equation designed to capture generalized bores. Their description of the set of conjugate flows, however, is largely obtained numerically. To the best of our knowledge, therefore, even basic structural questions about the set of nontrivial conjugate flows remain open.

A rigorous existence theory for bores in this configuration accordingly also remains open, in contrast with the rigid-lid case. Moreover, as in~\cite{chen_16_stratified, sinambela_21_large}, the non-existence of certain bores enters compactness arguments in the constructions of solitary waves. This is one reason a complete description of the set of conjugate flows is of interest, beyond the actual bores themselves. Large-amplitude internal solitary waves under a free surface have been computed numerically~\cite{woolfrenden_11_solitary}, where broadening to a table-top is observed. This suggests the existence of corresponding bore solutions.

\subsection{Informal statement and outline}

Our main result is a complete, and entirely explicit, resolution of the non-trivial conjugate flows, in terms of dimensionless variables. Here, $H$ and $B$ are heights of the bottom layer, $D$ is the depth ratio, $R$ is the density ratio, $F$ is the (upstream) Froude number, and $x$, $y$ are velocity ratios. Their precise definitions are given in \cref{sec formulation}.

\begin{mainresult*}[Informal]
    \hypertarget{thm main-informal}{}

    The set $\Omega$ of nontrivial physical conjugate flows is the graph of an explicit map
    \[
        (x,y) \longmapsto \parn{B,R,D,H,F^2},
    \]
    where the components are rational functions, given in~\eqref{parameterization_hat}, on the parameter set
    \[
        \hat{\Omega} = \set*{(x,y) \in \R_+^2 : (x+y-2)(y-1) < 0}.
    \]
    Moreover:
    \begin{enumerate}[(i)]
        \item \label{thm main components} $\hat{\Omega}$ has exactly two connected components,
              \[
                  \hat{\Omega}_{D \lessgtr 1} \ceq \set*{(x,y) \in \R_+^2 : 2-x \lessgtr y \lessgtr 1 \lessgtr x},
              \]
              on which $D \lessgtr 1$ and $B \gtrless H$, respectively. The components are separated by the single point $(1,1)$, which, after rational reparameterization, can be seen to correspond to a one-parameter family $\Gamma_{D=1}$ of critical trivial conjugate flows.
        \item \label{thm main monotone} There is a further explicit rational parameterization in terms of the pair $(t,R)$, where $t = (1 - y)/(x - 1)$, that is strictly monotone with respect to $t$. Specifically, $D$, $H$, $B$, $F^2$, $y^2 F^2$, and $y$ are strictly decreasing with respect to $t$, while $x$ and $x^2 F^2$ are strictly increasing. As a particular consequence, $\Omega_* \ceq \Omega \cup \Gamma_{D=1}$ is in one-to-one correspondence with $(H,R) \in (0,1)^2$, proving a claim of~\cite{dias_01_interfacial}. The latter correspondence is not explicit.
        \item The monotonicity furnished by~\ref{thm main monotone} immediately yields sharp, $R$-independent bounds on all flow quantities. For instance, $D < \tfrac{3}{2}$, $\tfrac{4}{9} < H, B < 1$, and $F^2 < 1$ on $\hat{\Omega}_{D>1}$; while $H,B,F^2 < \tfrac{1}{2}$ on $\hat{\Omega}_{D<1}$.
    \end{enumerate}
\end{mainresult*}

Several features of this result deserve emphasis. First, it is a \emph{complete} description; second, the physical parameter region is a \emph{polygonal} domain, cut out by a strikingly simple sign condition; third, the parameterization is by \emph{rational} functions; and fourth, the \emph{apparent} singular point is ameliorated after reparameterization. Together, they make $\Omega$ extraordinarily amenable to further analysis. We expect this to be decisive in the follow-up problem of constructing a global set of actual heteroclinic orbits connecting the pairs of conjugate states.

As a consequence of \hyperlink{thm main-informal}{Main Result}~\ref{thm main components}, every nontrivial conjugate flow satisfies $(D - 1)(B - H) < 0$. That is, the displacement of the free surface and internal interface, or pycnocline, are always opposite of each other. This is consistent both with broadening exclusively being observed for ``out-of-phase'' solitary waves in~\cite{woolfrenden_11_solitary}, and with the non-existence of monotone bores~\cite{chen_16_stratified, sinambela_21_large}.

Our approach to \hyperlink{thm main-informal}{Main Result} relies on a choice of independent variables that is not the most obvious one, and we believe the mechanism is of independent interest. In principle, it is possible to eliminate all but two of the original variables using successive resultants. However, the degree and complexity of the expressions involved grow rapidly, which is precisely why the existing literature stops at a numerical description.

We instead exploit that the system of five polynomial equations in seven variables carries two complementary pieces of linear structure. First, when $x$ and $y$ are used as unknowns, three of the equations are \emph{linear} in $(D,H,F^2)$, see~\eqref{conservation_of_mass_nondim_y}--\eqref{bernoulli_free_nondim}, so that $(D,H,F^2)$ can be solved for in terms of rational functions in $(B,x,y)$. The lone degenerate case is $x = 1$, but yields only trivial solutions by \cref{trivial_lemma}. Second, the two remaining equations are linear in the density ratio $R$, so a \emph{single} simple resultant with respect to $R$ closes the elimination. That resultant is the heart of the matter, and here we were rather fortunate: it splits completely with respect to $B$, see~\eqref{resultant}. Moreover, all but one factor either force a trivial solution by \cref{trivial_lemma}, or has a sign for physically relevant values. The sole surviving factor enables us to solve for $B$ in terms of $(x,y)$, and by back-substitution, the remaining four expressions. We have no satisfactory explanation for precisely why the factorization is so clean in these variables.

The singularity of the parameterization at $(x,y) = (1,1)$ is subsequently resolved by passing to variables $(t,X)$ adapted to the pencil of lines through that point. Explicitly, $x \eqc 1 + X$, and $y \eqc 1 - tX$, with $t$ therefore being a slope. In these variables, the critical trivial conjugate flows appear as a line segment separating the two components, and the parameterization extends smoothly up to the boundary. Trading $X$ for $R$, which \emph{still} preserves rationality, and is natural in applications, gives the monotonicity in $t$ recorded in \hyperlink{thm main-informal}{Main Result}~\ref{thm main monotone}.

The paper is organized as follows. In \cref{sec formulation}, we set up the conjugate flow system for a two-layer fluid with a free surface, non-dimensionalize it, and isolate the trivial solutions. \cref{sec sol} carries out the elimination procedure described above, and determines the physical parameter region. Next, in \cref{sec reparam}, we develop the two reparameterizations, prove the claimed monotonicity, and derive the sharp bounds. Finally, \cref{sec uniqueness} establishes the uniqueness of a particular special family of conjugate flows.

\section{Formulation of the problem}
\label{sec formulation}

\subsection{Setting and notation}

We consider two immiscible incompressible fluid layers of different densities $\rho_1 > \rho_2 > 0$, under the influence of gravity, whose acceleration is denoted by $g$. The flow is assumed to be uniform far up- and downstream in each layer, and takes place in an open channel with a flat bottom. Both the pycnocline and surface are therefore free. In the frame moving with the bore, we take the upstream speed to be $c$ in both layers, while the downstream speeds are denoted $\mo{c}_j$. The up- and downstream layer heights are $h_j$ and $\mo{h}_j$, respectively. In general, we use a ring to denote downstream quantities, and the layers are numbered from bottom to top. See \cref{fluid_domain}.

\begin{figure}[htb]
    \centering
    \tikzsetnextfilename{fluid_domain}
    \includegraphics[width=0.7\textwidth]{graphics/fluid_domain.tikz}
    \caption{A bore propagating in a two-layer fluid with a flat bottom and a free surface.}
    \label{fluid_domain}
\end{figure}

\subsection{Equations}

Conservation of mass within each layer corresponds to the two equations
\begin{equation}
    \label{conservation_of_mass}
    \rho_j c h_j = \rho_j \mo{c}_j \mo{h}_j, \qquad j = 1,2,
\end{equation}
while the Bernoulli equation reads
\begin{equation}
    \label{bernoulli_equation}
    \tfrac{1}{2}\rho_j v^2 + \rho_j g z + p_j = Q_j
\end{equation}
in each layer. Here $Q_j$ is the Bernoulli constant for each layer, $v$ is the fluid velocity, and $z$ is the height above the bottom. If we normalize with zero pressure on the free surface, then
\begin{equation}
    \label{bernoulli_free}
    Q_2 = \tfrac{1}{2} \rho_2c^2 + \rho_2 g (h_1 + h_2) = \tfrac{1}{2} \rho_2 \mo{c}_2^2 + \rho_2 g (\mo{h}_1 + \mo{h}_2),
\end{equation}
by evaluating~\eqref{bernoulli_equation} up- and downstream. This results in the expected hydrostatic pressure
\begin{equation}
    \label{pressure_upper}
    p_2 = \case*{\begin{matrix*}[l]
        \rho_2 g(h_1 + h_2 - z)           & \text{upstream},   \\
        \rho_2 g(\mo{h}_1 + \mo{h}_2 - z) & \text{downstream},
    \end{matrix*}} 
\end{equation}
in the upper layer.

Next, continuity of the pressure across the internal interface yields
\begin{equation}
    \label{bernoulli_internal}
    Q_1 = \tfrac{1}{2} \rho_1 c^2 + \rho_1 g h_1 + \rho_2 g h_2 = \tfrac{1}{2}\rho_1 \mo{c}_1^2 + \rho_1 g \mo{h}_1 + \rho_2 g \mo{h}_2,
\end{equation}
where we have used~\eqref{pressure_upper}. The pressure in the lower layer is therefore given by
\begin{equation}
    \label{pressure_lower}
    p_1(z) = \case*{\begin{matrix*}[l]
                 \rho_1 g (h_1 - z) + \rho_2 g h_2           & \text{upstream},   \\
                 \rho_1 g (\mo{h}_1 - z) + \rho_2 g \mo{h}_2 & \text{downstream},
             \end{matrix*}}
\end{equation}
which again is hydrostatic.

From~\eqref{pressure_upper} and~\eqref{pressure_lower} we see that the total pressure force is
\[
    \brak[\big]{\rho_2 g \parn[\big]{\tfrac{1}{2}h_2}}h_2 + \brak[\big]{\rho_1 g \parn[\big]{\tfrac{1}{2}h_1} + \rho_2 g h_2}h_1 = \tfrac{1}{2}\rho_1 g h_1^2 + \tfrac{1}{2} \rho_2 g h_2^2 + \rho_2 g h_1 h_2
\]
upstream, and
\[
    \tfrac{1}{2}\rho_1 g \mo{h}_1^2 + \tfrac{1}{2} \rho_2 g \mo{h}_2^2 + \rho_2 g \mo{h}_1 \mo{h}_2
\]
downstream. Conservation of momentum therefore leads to the equation
\begin{multline}
    \label{conservation_of_momentum}
    \parn{\rho_1 h_1 c^2 + \rho_2 h_2 c^2} + \parn[\big]{\tfrac{1}{2}\rho_1 g h_1^2 + \tfrac{1}{2}\rho _2 g h_2^2 + \rho_2 g h_1 h_2}                                                 \\
    = \parn{\rho_1 \mo{h}_1 \mo{c}_1^2 + \rho_2 \mo{h}_2 \mo{c}_2^2} + \parn[\big]{\tfrac{1}{2} \rho_1 g \mo{h}_1^2 + \tfrac{1}{2} \rho_2 g \mo{h}_2^2 + \rho_2 g \mo{h}_1 \mo{h}_2}.
\end{multline}
Conjugate flows in this setting are simultaneous solutions to the system of equations consisting of conservation of mass in each layer~\eqref{conservation_of_mass}; continuity of pressure across each interface~\eqref{bernoulli_free},~\eqref{bernoulli_internal}; and conservation of momentum~\eqref{conservation_of_momentum}.

\subsection{Non-dimensionalization}

We introduce the dimensionless variables
\begin{equation}
    \label{upcvar}
    \begin{gathered}
        H_j = \frac{h_j}{h_1 + h_2},\quad \mo{H}_j = \frac{\mo{h}_j}{h_1 + h_2}, \quad \mo{F}_j^2 = \frac{\mo{c}_j^2}{g(h_1 + h_2)}, \\
        R = \frac{\rho_2}{\rho_1},\quad F^2 = \frac{c^2}{g(h_1 + h_2)},\quad D = \frac{\mo{h}_1 + \mo{h}_2}{h_1 + h_2},
    \end{gathered}
\end{equation}
where $0 < H_j < 1$, and $0 < \mo{H}_j < D$. Here $H_j, \mo{H}_j$ are dimensionless layer heights, $0 < R < 1$ is the density ratio, $F > 0$ is the upstream Froude number, $\mo{F}_j > 0$ are downstream Froude numbers, and $D > 0$ is the dimensionless downstream depth. Clearly, we have the relations
\begin{equation}
    \label{layer_height_sums}
    H_1 + H_2 = 1, \qquad \mo{H}_1 + \mo{H}_2 = D
\end{equation}
relating the various layer heights.

Substituting~\eqref{upcvar} into~\eqref{conservation_of_mass},~\eqref{bernoulli_free},~\eqref{bernoulli_internal}, and~\eqref{conservation_of_momentum} yields the non-dimensionalized system
\begin{align*}
    F H_j                                                       & = \mo{F}_j \mo{H}_j                                                                                        \\
    \tfrac{1}{2} F^2 + 1                                        & = \tfrac{1}{2} \mo{F}_2^2 + D                                                                              \\
    \tfrac{1}{2} F^2 + H_1 + R H_2                              & = \tfrac{1}{2}\mo{F}_1^2 + \mo{H}_1 + R \mo{H}_2                                                           \\
    F^2(H_1 + R H_2) + \tfrac{1}{2}\parn*{H_1^2 + R H_2(1+H_1)} & = \mo{F}_1^2 \mo{H}_1 + R \mo{F}_2^2 \mo{H}_2 + \tfrac{1}{2}\parn*{\mo{H}_1^2 + R \mo{H}_2(D + \mo{H}_1)},
\end{align*}
which, if we introduce
\[
    H \ceq H_1, \quad B \ceq \mo{H}_1, \quad x \ceq \frac{\mo{F}_2}{F}, \quad y \ceq \frac{\mo{F}_1}{F}
\]
and eliminate $H_2$, $\mo{H}_2$ using~\eqref{layer_height_sums}, reads
\begin{subequations}
    \label{nondim}
    \begin{align}
        H                                                        & = y B \label{conservation_of_mass_nondim_y}                                                                     \\
        1 - H                                                    & = x (D - B) \label{conservation_of_mass_nondim_x}                                                               \\
        \tfrac{1}{2} F^2 + 1                                     & = \tfrac{1}{2} x^2 F^2 + D \label{bernoulli_free_nondim}                                                        \\
        \tfrac{1}{2} F^2 + H + R (1 - H)                         & = \tfrac{1}{2}y^2 F^2 + B + R (D - B) \label{bernoulli_internal_nondim}                                         \\
        F^2(H + R (1 - H)) + \tfrac{1}{2}\parn*{H^2 + R (1-H^2)} & = F^2\parn*{y^2 B + x^2R (D-B)} + \tfrac{1}{2}\parn*{B^2 + R (D^2-B^2)} \label{conservation_of_momentum_nondim}
    \end{align}
\end{subequations}
in terms of $(D,H,B,R,F^2,x,y)$. This is the system we will investigate.

Physical non-stationary solutions to~\eqref{nondim} are those that satisfy
\begin{alignat}{-1}
    0 & < H < 1, \quad 0 < B < D, \quad & 0 & < R < 1,\quad & x,y   & > 0, \quad & F^2 & > 0,\label{physical}
    \intertext{with any choice}
    0 & < H = B < 1 = D,                & 0 & < R < 1,      & x = y & = 1,       & F^2 & > 0 \label{trivial}
\end{alignat}
being a \emph{trivial} solution. If one equality holds here, then they all hold:

\begin{lemma}
    \label{trivial_lemma}
    If either $x = y$, $x = 1$, $y = 1$, $D = 1$, or $H = B$ for a physical solution, then it is necessarily trivial.
\end{lemma}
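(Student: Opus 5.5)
The plan is to prove the lemma by reducing every one of the five hypotheses to the single case $x = 1$, and then to show directly that $x = 1$ forces all of~\eqref{trivial}. Throughout I will use only the mass equations~\eqref{conservation_of_mass_nondim_y}--\eqref{conservation_of_mass_nondim_x} and the two Bernoulli equations~\eqref{bernoulli_free_nondim}--\eqref{bernoulli_internal_nondim}, together with the physical constraints~\eqref{physical}. The momentum equation~\eqref{conservation_of_momentum_nondim} should not be needed.

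First I will derive one auxiliary identity. Subtracting~\eqref{bernoulli_free_nondim} from~\eqref{bernoulli_internal_nondim} gives
\[
    (1-R)\parn*{(H-1) - (B-D)} = \tfrac{1}{2}(y^2 - x^2)F^2.
\]
By~\eqref{conservation_of_mass_nondim_x} we have $H - 1 = -x(D-B)$, so the left-hand side factors, and we obtain
\begin{equation*}
    (1-R)(1-x)(D-B) = \tfrac{1}{2}(y^2-x^2)F^2 .
\end{equation*}
Separately, \eqref{bernoulli_free_nondim} can be rewritten as $D - 1 = \tfrac12(1-x^2)F^2$.

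Next I treat the cases in turn.
\begin{enumerate}[(a)]
    \item If $x = 1$: then $D = 1$ by~\eqref{bernoulli_free_nondim}. Equation~\eqref{conservation_of_mass_nondim_x} then gives $1 - H = 1 - B$, so $H = B$. Finally~\eqref{conservation_of_mass_nondim_y} together with $B > 0$ gives $y = 1$. Hence the solution is trivial.
    \item If $D = 1$: then $(1-x^2)F^2 = 0$. Since $F^2 > 0$ and $x > 0$, this gives $x = 1$, and case (a) applies.
    \item If $x = y$: the auxiliary identity reduces to $(1-R)(1-x)(D-B) = 0$. Since $R < 1$ and $D > B$, this forces $x = 1$.
    \item If $H = B$ or $y = 1$: these are equivalent by~\eqref{conservation_of_mass_nondim_y}, since $B > 0$. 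In this case the right-hand side of the auxiliary identity equals $\tfrac12(1-x^2)F^2 = D - 1$. On the other hand, $D - 1 = (D-B) - (1-H) = (1-x)(D-B)$ by~\eqref{conservation_of_mass_nondim_x}. Comparing the two expressions yields $R(1-x)(D-B) = 0$. Since $R > 0$ and $D > B$, this gives $x = 1$.
\end{enumerate}

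The only genuine step is spotting the factorization in the auxiliary identity, which comes from combining the two Bernoulli equations with the second mass equation. After that, each case is a one-line sign argument using the strict physical inequalities $0 < R < 1$, $B < D$, $B > 0$, and $x, F^2 > 0$. The main thing I will check is that every cancellation divides only by quantities that are strictly nonzero under~\eqref{physical}.
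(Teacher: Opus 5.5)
Your proof is correct and is essentially the paper's argument. The paper also rewrites and subtracts the two Bernoulli equations, combines the result with the mass equations and the strict inequalities in \eqref{physical}, and never uses momentum; it only arranges the implications as a cycle $x=y \Rightarrow x=1 \Leftrightarrow D=1 \Rightarrow H=B \Leftrightarrow y=1 \Rightarrow x=y$ rather than routing every case through $x=1$, and in the $y=1$ case it reads $R(D-1)=0$ directly off the rewritten internal Bernoulli equation, which is your $R(1-x)(D-B)=0$ in another form.
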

\begin{proof}
    Observe that we may write~\eqref{bernoulli_free_nondim} and~\eqref{bernoulli_internal_nondim} as
    \begin{align}
        \tfrac{1}{2}(1 - x^2) F^2 & = D - 1, \label{bernoulli_free_nondim_3}                         \\
        \tfrac{1}{2}(1 - y^2) F^2 & = (1 - R)(B - H) + R(D - 1), \label{bernoulli_internal_nondim_3}
    \end{align}
    and that we can subtract them to obtain
    \begin{equation}
        \label{bernoulli_subtracted}
        \tfrac{1}{2}(x^2 - y^2)F^2 + (1 - R)(D - 1) = (1 - R)(B - H).
    \end{equation}
    It is straightforward to show that any of the equalities imply the other ones:
    \begin{description}
        \item[$x = y \implies x = 1$] If $x = y$, we get $D - 1 = B - H$ from~\eqref{bernoulli_subtracted}, or equivalently $D - B = 1 - H$. We use this to obtain $x = 1$ from~\eqref{conservation_of_mass_nondim_x}.
        \item[$x = 1 \iff D = 1$] This is immediate from~\eqref{bernoulli_free_nondim_3}.
        \item[$D = 1 \implies H = B$] If $D = 1$, then $x = 1$. We subsequently find $H = B$ from~\eqref{conservation_of_mass_nondim_x}.
        \item[$H = B \iff y = 1$] Immediate from~\eqref{conservation_of_mass_nondim_y}.
        \item[$y = 1 \implies x = y$] If $y = 1$, then $H = B$. We conclude from~\eqref{bernoulli_internal_nondim_3} that $D = 1$, and therefore that $x = 1$. \qedhere
    \end{description}
\end{proof}

\section{Solving for nontrivial conjugate flows in terms of \texorpdfstring{$(x,y)$}{(x,y)}}
\label{sec sol}

We define the set
\begin{equation}
    \label{nontrivial_manifold}
    \Omega \ceq \set*{(D,H,B,R,F^2,x,y) \in \R^7 : \text{\eqref{nondim} and~\eqref{physical} hold, but~\eqref{trivial} does \emph{not}}}
\end{equation}
of physical nontrivial conjugate flows. By \cref{trivial_lemma} we have
\begin{equation}
    \label{nontrivial_inequalities}
    x \neq y, \quad x,y,D \neq 1, \quad H \neq B
\end{equation}
for every such conjugate flow, and we may therefore work under this assumption.

\subsection{Eliminating \texorpdfstring{$D$}{D}, \texorpdfstring{$H$}{H} and \texorpdfstring{$F^2$}{F²}}

The subsystem~\eqref{conservation_of_mass_nondim_y} --~\eqref{bernoulli_free_nondim} is linear in $(D,H,F^2)$, and equivalent to
\begin{subequations}
    \label{HDF_Bxy}
    \begin{align}
        H                         & = y B \label{H_Bxy}                 \\
        x D                       & = 1 + (x - y)B \label{D_Bxy}        \\
        \tfrac{1}{2}x(x^2 - 1)F^2 & = (x - 1) - (x - y)B, \label{F_Bxy}
    \end{align}
\end{subequations}
since $x \neq 0$ for physical solutions. Consequently, we may express $(D,H,F^2)$ in terms of $(B,x,y)$, since also $x \neq 1$. Once inserted into~\eqref{bernoulli_internal_nondim} and~\eqref{conservation_of_momentum_nondim}, we are left with two polynomial equations in the four remaining variables $(B,R,x,y)$.

\subsection{Reduction to a resultant problem}

The aforementioned system in $(B,R,x,y)$ is linear in $R$, and reads
\begin{equation}
    \label{system_BRxy}
    \begin{aligned}
        a_1 R + a_0 & = 0, \qquad \begin{aligned}
                                      a_1 & \ceq (x - 1)^2(x + 1)(1 - By)                             \\
                                      a_0 & \ceq (y - 1)\brak*{B \parn*{x^3+(x-1)y-y^2} - (x-1)(y+1)}
                                  \end{aligned} \\
        b_1 R + b_0 & = 0, \qquad \begin{aligned}
                                      b_1 & \ceq (x - 1)^2(1 - By)\brak*{B\parn*{2x+m(x)y}+(x-1)^2}                 \\
                                      b_0 & \ceq (y - 1)x B\brak*{ B \parn*{x(x^2-1)+x(3+x^2)y - 4y^2} - 4 y(x-1)},
                                  \end{aligned}
    \end{aligned}
\end{equation}
where we have defined
\[
    m(x) \ceq x^2 + 2 x - 1.
\]
Its resultant with respect to $R$ factors surprisingly nicely with respect to $B$ as
\begin{equation}
    \label{resultant}
    a_1 b_0 - b_1 a_0 = (x - 1)^2(y - 1)\parn[\big]{B(x-y)-(x-1)}(1 - By)\brak*{B(x-y)\hat{s}(x,y)-(x-1)^2(y+1)},
\end{equation}
with
\[
    \hat{s}(x,y) \ceq m(x) y + n(x), \qquad n(x) \ceq x^3 - x^2 - x - 1.
\]

\subsection{Analysis of the resultant}

The system~\eqref{system_BRxy} has a solution if and only if the resultant in~\eqref{resultant} vanishes. We therefore investigate each factor for roots leading to nontrivial solutions.

\begin{description}
    \item[$(x - 1)$ and $(y - 1)$] Excluded by~\eqref{nontrivial_inequalities}
    \item[$B(x - y) - (x - 1)$] Since~\eqref{D_Bxy} implies that $xD = 1 + (x - 1) = x$ in this case, we get $D = 1$, which again is excluded by~\eqref{nontrivial_inequalities}
    \item[$1 - By$] For physical solutions $1 - By = 1 - H > 0$ by~\eqref{H_Bxy}.
\end{description}

The only remaining possibility for nontrivial solutions is thus that
\[
    B(x - y)\hat{s}(x,y) = (x - 1)^2(y + 1)
\]
is satisfied. For nontrivial solutions $\hat{s}$ must therefore necessarily be nonzero, and
\begin{subequations}
    \label{parameterization_hat}
    \begin{equation}
        \label{B_hat}
        B = \hat{B}(x,y) \ceq \frac{(x - 1)^2(y + 1)}{(x - y)\hat{s}(x,y)},
    \end{equation}
    which inserted back into the first equation in~\eqref{system_BRxy} yields
    \[
        R(x - 1)\hat{w}(x,y) = (y - 1)^3(y + 1),
    \]
    where
    \[
        \hat{w}(x,y) \ceq 2x y^2 - 2(x + 1)y - n(x).
    \]

    Like before, $\hat{w}$ must necessarily be nonzero at a nontrivial solution, so we conclude that
    \begin{equation}
        \label{R_hat}
        R = \hat{R}(x,y) \ceq \frac{(y - 1)^3 (y + 1)}{(x - 1)\hat{w}(x,y)},
    \end{equation}
    and using~\eqref{HDF_Bxy} with~\eqref{B_hat} we finally find
    \begin{equation}
        D = \hat{D}(x,y) \ceq \frac{2 x y + x^2 - 3}{\hat{s}(x,y)}, \quad H = \hat{H}(x,y) \ceq \frac{(x - 1)^2(y + 1)y}{(x - y)\hat{s}(x,y)}, \quad F^2 = \hat{\Phi}(x,y) \ceq \frac{2(x + y - 2)}{\hat{s}(x,y)}. \label{DHPhi_hat}
    \end{equation}
\end{subequations}
That is, the set $\Omega$ in~\eqref{nontrivial_manifold} is a submanifold of the graph of the rational functions in~\eqref{parameterization_hat}. It remains to determine the exact parameter region. This will turn out to have a pleasantly simple description.

\subsection{The physical region in terms of \texorpdfstring{$(x,y)$}{(x,y)}}

Let us define
\begin{equation}
    \label{physical_region_set}
    \hat{\Omega} \ceq \set*{(x,y) \in \R_+^2 : \text{\eqref{parameterization_hat} are well-defined, and satisfy the inequalities in~\eqref{physical}}},
\end{equation}
which by the above discussion is exactly the parameter region corresponding to $\Omega$. To facilitate working with the upper bounds in~\eqref{physical}, we record that
\begin{equation}
    \label{minus_hat}
    \hat{D} - \hat{B} = - \frac{\hat{w}(x,y)}{(x - y)\hat{s}(x,y)}, \quad 1 - \hat{H} = - \frac{x\hat{w}(x,y)}{(x - y)\hat{s}(x,y)}, \quad 1 - \hat{R} = - \frac{(x - y)^2 (x + y - 2)(x + y)}{(x - 1)\hat{w}(x,y)}.
\end{equation}
The constraints can be translated into conditions on the factors appearing in~\eqref{parameterization_hat} and~\eqref{minus_hat}. Explicitly
\begin{align*}
    (x - y)\hat{s}(x,y)              & > 0, \tag{$\hat{B} > 0$ and $\hat{H} > 0$}       \\
    \hat{w}(x,y) (x - y)\hat{s}(x,y) & < 0, \tag{$\hat{B} < \hat{D}$ and $\hat{H} < 1$} \\
    \hat{w}(x,y)(x - 1)(y - 1)       & > 0, \tag{$\hat{R} > 0$}                         \\
    \hat{w}(x,y)(x + y - 2)(x - 1)   & < 0, \tag{$\hat{R} < 1$}                         \\
    \hat{s}(x,y)(x + y - 2)          & > 0, \tag{$\hat{\Phi} > 0$}
\end{align*}
which may be summarized as
\[
    (x + y - 2)(y - 1) < 0, \quad \hat{w}(x,y) < 0, \quad \text{and} \quad \sgn(\hat{s}(x,y)) = \sgn(x - 1) = \sgn(x - y) = \sgn(x + y - 2).
\]

In fact, since one has the identities
\begin{alignat*}{-1}
    \hat{w}(x,y) & = (x + y - 2)\parn*{(x+y+2)(y-1) - (x^2-1)} - (y + 1)(y - 1)^2, \\
    \hat{s}(x,y) & = (1 + x + x^2 + y)(x + y - 2) - (y^2 - 1),                     \\
    x - y        & = (x + y - 2) - 2(y - 1),                                       \\
    x - 1        & = (x + y - 2) - (y - 1),
\end{alignat*}
the physical region in~\eqref{physical_region_set} can equivalently be described as simply
\[
    \hat{\Omega} = \set*{(x,y) \in \R_+^2 : (x+y-2)(y-1) < 0},
\]
which is natural to split as
\[
    \hat{\Omega} = \hat{\Omega}_{D>1} \cup \hat{\Omega}_{D<1},
\]
with
\[
    \hat{\Omega}_{D \lessgtr 1} \ceq \set*{(x,y) \in \R_+^2 : 2-x \lessgtr y \lessgtr 1 \lessgtr x}.
\]
This also induces components $\Omega_{D \lessgtr 1}$ of $\Omega$. The naming is motivated by the following dichotomy, which follows immediately from~\eqref{bernoulli_free_nondim_3} and~\eqref{bernoulli_internal_nondim_3}.

\begin{lemma}
    $D \lessgtr 1$ and $B \gtrless H$ on $\Omega_{D \lessgtr 1}$.
\end{lemma}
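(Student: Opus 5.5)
The plan is to read off the signs directly from the two Bernoulli relations~\eqref{bernoulli_free_nondim_3} and~\eqref{bernoulli_internal_nondim_3}, using only the sign information defining $\hat{\Omega}_{D\lessgtr 1}$ together with $F^2>0$ and $0<R<1$ from~\eqref{physical}.

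First, for $D$: on $\hat{\Omega}_{D>1}$ we have $x<1$, hence $1-x^2>0$ since $x>0$. Then~\eqref{bernoulli_free_nondim_3} gives $D-1=\tfrac12(1-x^2)F^2>0$. On $\hat{\Omega}_{D<1}$ we have $x>1$, so $1-x^2<0$ and $D<1$. (Alternatively one could use $\hat{D}-1=(x-1)(x+y-2)\cdot(\text{positive})/\hat{s}$, but the Bernoulli route is cleaner.)

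Second, for $B-H$: rewrite~\eqref{bernoulli_internal_nondim_3} as
\[
(1-R)(B-H)=\tfrac12(1-y^2)F^2-R(D-1).
\]
On $\hat{\Omega}_{D<1}$ we have $y<1$, so $1-y^2>0$, and $D-1<0$ by the first step. Hence the right-hand side is a sum of two positive terms, and since $1-R>0$ we get $B>H$.

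On $\hat{\Omega}_{D>1}$ we have $y>1$, and both terms on the right are negative, giving $B<H$.

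Alternatively, one can use $H=yB$ from~\eqref{H_Bxy} with $B>0$: then $B-H=(1-y)B$ has the sign of $1-y$. This is even more immediate, and I would present it as the primary argument for the second claim, with the Bernoulli computation as a cross-check.

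No step is a real obstacle. The only care needed is to confirm that the definition of $\hat{\Omega}_{D\lessgtr 1}$ really gives $x\lessgtr 1$ and $y\gtrless 1$ in the right orientation. Reading the chain $2-x\lessgtr y\lessgtr 1\lessgtr x$ with the upper symbol gives $y<1<x$ on the $D<1$ component, and with the lower symbol gives $y>1>x$ on the $D>1$ component.
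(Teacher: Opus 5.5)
Your proof is correct and is the paper's argument: the lemma is read off from \eqref{bernoulli_free_nondim_3} and \eqref{bernoulli_internal_nondim_3} using $x \lessgtr 1$, $y \gtrless 1$, $F^2>0$ and $0<R<1$. Your shortcut $B-H=(1-y)B$ from \eqref{H_Bxy} is an equally valid, slightly more direct route to the second claim. The only slip is a sign error in your unused parenthetical aside: in fact $\hat{D}-1=-(x-1)(x+1)(x+y-2)/\hat{s}$, so the extra factor is $-(x+1)<0$, not positive.
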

\begin{remark}
    In particular, one has $(D - 1)(B - H) < 0$ for all nontrivial conjugate flows. The surface and pycnocline therefore always displace in diametrically opposite directions. This is consistent with our expectation from the established \emph{non}existence of stratified monotone bores~\cite{chen_16_stratified, sinambela_21_large}. The present analysis shows that, at least for the special case of two layers, the existence is precluded already at the conjugate flow level. We also note that the property is in line with the observation that broadening only seems to occur for ``out-of-phase'' solitary waves in~\cite{woolfrenden_11_solitary}.
\end{remark}

The boundary of $\hat{\Omega}$ can be decomposed as
\[
    \partial \hat{\Omega} = \hat{\Gamma}_{H=0} \cup \hat{\Gamma}_{H=1} \cup \hat{\Omega}_{D=1} \cup \hat{\Gamma}_{R=0} \cup \hat{\Gamma}_{R=1},
\]
where
\[
    \begin{aligned}
        \hat{\Gamma}_{H=0} & \ceq [2,\infty) \times \set{0}, \\
        \hat{\Gamma}_{H=1} & \ceq \set{0} \times [1,2],
    \end{aligned}
    \qquad \hat{\Gamma}_{D=1} \ceq \set{(1,1)}, \qquad
    \begin{aligned}
        \hat{\Gamma}_{R=0} & \ceq \overline{\R}_+ \times \set{1},            \\
        \hat{\Gamma}_{R=1} & \ceq \set{(x,y) \in \overline{\R}_+^2 : x+y=2},
    \end{aligned}
\]
as illustrated in \cref{fig:parameter_region}. We also define
\[
    \hat{\Omega}_* \ceq \hat{\Omega} \cup \hat{\Gamma}_{D=1},
\]
which includes the point corresponding to critical trivial conjugate flows.

\begin{figure}[htb]
    \centering
    \tikzsetnextfilename{parameter_region}
    \includegraphics[width=.7\textwidth]{graphics/parameter_region.tikz}
    \caption{The physical parameter region. We restrict our view to $x < 4$ here, but $\hat{\Omega}_{D<1}$ is of course unbounded. The level curves are $R = (k/10)^3$ (solid) and $H = k/10$ (dashed) for $k = 1,\dots,9$, ordered bottom to top at the $y$-axis, and right to left, respectively.}
    \label{fig:parameter_region}
\end{figure}

\subsection{A special family}
\label{special_section}

If we compute the derivatives
\begin{equation}
    \label{D_hat_Phi_hat_derivatives}
    \partial_y \hat{\Phi} = -\frac{2(x - 3)(x - 1)}{\hat{s}(x,y)^2}, \qquad \partial_y \hat{D} = \frac{(x + 1)(x - 1)^2(x - 3)}{\hat{s}(x,y)^2},
\end{equation}
we notice that there is change in monotonicity of $\hat{\Phi}$ and $\hat{D}$ with respect to $y$ at $x = 3$. In particular, $\hat{\Phi}$ and $\hat{D}$ are both constant along the line $x = 3$. Specifically,
\[
    \hat{D} = \frac{3}{7}, \quad \hat{H} = \frac{2y}{7(3 - y)}, \quad \hat{B} = \frac{2}{7(3 - y)}, \quad \hat{R} = \frac{(1 - y)^3}{4(7 - 3y)}, \quad \hat{\Phi} = \frac{1}{7},
\]
at each point $(3,y)$ on
\[
    \hat{\Gamma}_{D=3/7} \ceq \set{3} \times (0,1).
\]
We prove in \cref{special_uniqueness} that this is the only family on $\Omega$ for which two or more variables are held constant.

\section{Reparameterization}
\label{sec reparam}

\subsection{The pencil of lines through \texorpdfstring{$(1,1)$}{(1,1)}}

If we define variables $(t,X)$ through
\begin{equation}
    \label{pencil_variables}
    x = \tilde{x}(t,X) \ceq 1 + X, \qquad y = \tilde{y}(t,X) \ceq 1 - t X,
\end{equation}
then $\hat{\Omega}_{D>1}$ and $\hat{\Omega}_{D<1}$ map to
\[
    \tilde{\Omega}_{D>1} = (0,1) \times (-1,0), \qquad \text{and} \qquad \tilde{\Omega}_{D<1} = \set*{(t,X) \in (0,1) \times \R_+ : t X < 1},
\]
respectively. From~\eqref{parameterization_hat}, we obtain
\begin{equation}
    \label{solution_tilde}
    \begin{aligned}
        D & = \tilde{D}(t,X) \ceq \frac{2(2 - t) + (1 - 2t)X}{\tilde{s}(t,X)},     \\
        H & = \tilde{H}(t,X) \ceq \frac{(2 - t X)(1 - tX)}{(1 + t)\tilde{s}(t,X)}, \\
        B & = \tilde{B}(t,X) \ceq \frac{2 - tX}{(1 + t)\tilde{s}(t,X)},
    \end{aligned}
    \qquad
    \begin{aligned}
        R   & = \tilde{R}(t,X) \ceq \frac{(2 - tX)t^3}{\tilde{w}(t,X)}, \\
        F^2 & = \tilde{\Phi}(t,X) \ceq \frac{2(1 - t)}{\tilde{s}(t,X)},
    \end{aligned}
\end{equation}
where
\[
    \tilde{s}(t,X) \ceq 2(2 - t) + (3 - 4t)X + (1 - t)X^2, \qquad \tilde{w}(t,X) \ceq 2(1 + t(1 - t)) - (2t^2 - 1)X.
\]

\begin{figure}[htb]
    \centering
    \tikzsetnextfilename{parameter_region_tilde}
    \includegraphics[width=0.7\textwidth]{graphics/parameter_region_tilde.tikz}
    \caption{The physical parameter region in $(t,X)$, with the same level curves as in \cref{fig:parameter_region}.}
    \label{fig:parameter_region_tilde}
\end{figure}

Since $\tilde{s}$ is linear in $t$, and
\begin{alignat*}{-1}
    \tilde{s}(0,X)   & = (X + 2)(X + 1) + 1 > 1, \qquad         & (X > -1),     \\
    \tilde{s}(1,X)   & = 2 - X > 1,                             & (-1 < X < 1), \\
    \tilde{s}(t,1)   & = 8 - 7t > 1,                            & (0 < t < 1),  \\
    \tilde{s}(1/X,X) & = X + 2\parn[\Big]{1-\frac{1}{X^2}} > 1, & (X > 1),
\end{alignat*}
we have $\tilde{s} > 1$ on the larger
\[
    \tilde{\Omega}_* \ceq \set*{(t,X) \in (0,1) \times (-1,\infty) : t X < 1}.
\]
Similarly, $\tilde{w}$ is linear in $X$, and
\[
    \tilde{w}(t,-1) = 1 + 2t > 1, \qquad \tilde{w}(t,1/t) = \frac{1}{t} + 2(1 - t^2) > 1
\]
for all $t \in (0,1)$, so $\tilde{w} > 1$ on the same set.

The parameterized family in~\eqref{solution_tilde} therefore extends smoothly to $\tilde{\Omega}_*$, and even its boundary. The added line segment
\[
    \tilde{\Gamma}_{D=1} \ceq (1,0) \times \set{0},
\]
which was collapsed to a point in the previous parameterization, corresponds to a one-parameter family of critical trivial solutions $\Gamma_{D=1}$ separating $\Omega_{D>1}$ and ${\Omega}_{D<1}$. Explicitly
\[
    \tilde{D} = 1, \quad \tilde{H} = \tilde{B} = \frac{1}{(2 - t)(1 + t)}, \quad \tilde{R} = \frac{t^3}{1 + t(1 - t)}, \quad \tilde{\Phi} = \frac{1 - t}{2 - t}, \quad \tilde{x} = \tilde{y} = 1
\]
at $(t,0) \in \tilde{\Gamma}_{D=1}$, for all $t \in (0,1)$. Following our established notation, we write $\Omega_* \ceq \Omega \cup \Gamma_{D=1}$.

The boundary of $\tilde{\Omega}_*$ is
\[
    \partial \tilde{\Omega}_* = \tilde{\Gamma}_{H=0} \cup \tilde{\Gamma}_{H=1} \cup \tilde{\Gamma}_{R=0} \cup \tilde{\Gamma}_{R=1},
\]
see \cref{fig:parameter_region_tilde}, where
\begin{alignat*}{-1}
    \tilde{\Gamma}_{H=0} & \ceq \set{(t,X) \in (0,1] \times \R_+ : t X = 1}, & \qquad \tilde{\Gamma}_{H=1} & \ceq [0,1] \times \set{-1}, \\
    \tilde{\Gamma}_{R=0} & \ceq \set{0} \times [-1,\infty),                  & \tilde{\Gamma}_{R=1}        & \ceq \set{1} \times [-1,1],
\end{alignat*}
while the special family ends up on
\[
    \tilde{\Gamma}_{D=3/7} \ceq (0,1/2) \times \set{2}.
\]

\subsection{Replacing \texorpdfstring{$X$}{X} with \texorpdfstring{$R$}{R}}

In practice, it is more natural to fix $R$, so it makes sense to turn R into a parameter. This is still possible while retaining the use of rational functions. Moreover, crucially, we will see that this parameterization is, in fact, strictly monotone with respect to $t$.

Observe that it is straightforward to invert the equation for $R$ in~\eqref{solution_tilde} to obtain
\begin{equation}
    \label{X_breve}
    X = \breve{X}(t,R) \ceq \frac{2t^3 - 2R(1 + t(1 - t))}{t^4 - R(2 t^2 - 1)},
\end{equation}
with physical parameter set
\begin{align*}
    \breve{\Omega}_*       & \ceq \set*{(t,R) \in (0,1)^2 : R_{H=0}(t) < R < R_{H=1}(t)},
    \shortintertext{critical curve}
    \breve{\Gamma}_{D=1}   & \ceq \set*{(t,R) \in (0,1)^2 : R = R_{D=1}(t)},
    \shortintertext{and special family}
    \breve{\Gamma}_{D=3/7} & \ceq \set*{(t,R) \in \parn*{0,\tfrac{1}{2}} : R = R_{D=3/7}(t)},
\end{align*}
where we have introduced
\begin{equation}
    \label{R_functions}
    \begin{alignedat}{-1}
        R_{H=0}(t) & \ceq \frac{t^4}{1 + 2t(1 - t^2)}, \qquad & R_{D=3/7}(t) & \ceq \frac{t^3}{2 + 3t},       \\
        R_{H=1}(t) & \ceq \frac{t^3(2 + t)}{1 + 2t},          & R_{D=1}(t)   & \ceq \frac{t^3}{1 + (1 - t)t},
    \end{alignedat}
\end{equation}
as illustrated in \cref{fig:parameter_region_breve}. The boundary of $\breve{\Omega}_*$ is
\[
    \partial\breve{\Omega}_* = \breve{\Gamma}_{H=0} \cup \breve{\Gamma}_{H=1} \cup \breve{\Gamma}_{R=0} \cup \breve{\Gamma}_{R=1},
\]
with
\begin{alignat*}{-1}
    \breve{\Gamma}_{H=0} & \ceq \set{(t,R) \in [0,1]^2 : R = R_{H=0}(t)}, \qquad & \breve{\Gamma}_{R=0} & \ceq \set{(0,0)}, \\
    \breve{\Gamma}_{H=1} & \ceq \set{(t,R) \in [0,1]^2 : R = R_{H=1}(t)},        & \breve{\Gamma}_{R=1} & \ceq \set{(1,1)}.
\end{alignat*}

\begin{figure}[htb]
    \centering
    \tikzsetnextfilename{parameter_region_breve}
    \includegraphics[width=0.7\textwidth]{graphics/parameter_region_breve.tikz}
    \caption{The physical parameter region in $(t,R)$. Only level curves for $H$ included, since the ones for $R$ are now straight lines.}
    \label{fig:parameter_region_breve}
\end{figure}

\begin{lemma}
    \label{breve_lemma}
    \leavevmode
    \begin{enumerate}[(i), beginpenalty=10000]
        \item The functions $R_{H=0}$, $R_{D=1}$, and $R_{H=1}$ in~\eqref{R_functions} satisfy $R(0) = 0$ and $R(1) = 1$, are strictly increasing on $[0,1]$, and satisfy
              \begin{alignat}{-1}
                  & R_{H=0}(t) < R_{D=1}(t) < R_{H=1}(t) < t^2 \qquad &  & \text{for all } t \in (0,1).\label{R_order}
                  \intertext{\item The function $R_{D=3/7}$ is strictly increasing on $[0,1/2]$, satisfies $R_{D=3/7}(0) = 0$ and $R_{D=3/7}(1) = 1/28$, and}
                  & R_{H=0}(t) < R_{D=3/7}(t) < R_{D=1}(t) \qquad     &  & \text{for all } t \in (0,1/2). \notag
              \end{alignat}
        \item Finally, the denominator in~\eqref{X_breve} is positive on $(0,1)^2$, and $\partial_t \breve{X}(t,R) > 0$ for all $(t,R) \in \breve{\Omega}_*$.\label{breve_lemma_part_3}
    \end{enumerate}
\end{lemma}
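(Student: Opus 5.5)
The plan is to prove all three parts by direct computation with the explicit rational functions in~\eqref{R_functions}. Every denominator is positive on the relevant interval, so each inequality reduces to a polynomial inequality after cross-multiplying.

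\emph{Part (i).} The endpoint values are immediate: each function vanishes at $t=0$ and equals $1$ at $t=1$. For monotonicity I would compute the numerators of the derivatives. These come out as
\[
t^3(4+6t-2t^3),\qquad t^2(3+2t-t^2),\qquad 6t^2(1+t)^2
\]
for $R_{H=0}$, $R_{D=1}$ and $R_{H=1}$ respectively, and all are positive on $(0,1]$. For the chain~\eqref{R_order}, cross-multiplication shows that both $R_{H=0}<R_{D=1}$ and $R_{D=1}<R_{H=1}$ reduce to
\[
1+t-t^2-t^3=(1+t)(1-t^2)>0 .
\]
The last inequality $R_{H=1}<t^2$ reduces to $t(2+t)<1+2t$, that is, $t^2<1$.

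\emph{Part (ii).} $R_{D=3/7}=t^3/(2+3t)$ is clearly increasing and vanishes at $0$. The stated value $1/28$ is its value at the right endpoint $t=1/2$, since $\tfrac{1/8}{7/2}=\tfrac1{28}$. The upper bound $R_{D=3/7}<R_{D=1}$ is equivalent to $1+t-t^2<2+3t$, which always holds. The lower bound $R_{H=0}<R_{D=3/7}$ reduces to
\[
1-3t^2-2t^3=(1-2t)(1+t)^2>0,
\]
which holds exactly on $(0,1/2)$. This factorisation explains why the special family terminates at $t=1/2$.

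\emph{Part (iii).} The denominator $t^4-R(2t^2-1)$ is affine in $R$. It equals $t^4>0$ at $R=0$ and $(1-t^2)^2>0$ at $R=1$, so it is positive on $(0,1)^2$. For $\partial_t\breve X$ I would write $\breve X=N/M$ with
\[
N=2t^3-2R(1+t-t^2),\qquad M=t^4-R(2t^2-1),
\]
and study $P\ceq N_tM-NM_t$. This is a quadratic polynomial in $R$ with coefficients polynomial in $t$; its constant term is $-2t^6<0$. Hence positivity genuinely requires the constraint $R>R_{H=0}(t)$ and cannot hold on all of $(0,1)^2$, and this is the main obstacle.

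To handle it, I would first try to factor $P$ over $\mathbb{Q}[t,R]$, hoping for a factor vanishing on or below the curve $R=R_{H=0}(t)$. As a backup I would argue in the $(t,X)$ variables. Since $\tilde R(t,\breve X(t,R))=R$,
\[
\partial_t\breve X=-\frac{\partial_t\tilde R}{\partial_X\tilde R}.
\]
Both partial derivatives of $\tilde R=(2-tX)t^3/\tilde w$ are explicit, and their signs can be checked on $\tilde\Omega_*$, using $\tilde w>1$ and the linearity of $\tilde w$ in $X$ shown earlier.

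Failing a clean factorisation, I would use that $P$ is quadratic in $R$ to reduce the claim to endpoint checks. I would check $P\ge 0$ on the two boundary curves $R=R_{H=0}(t)$ and $R=R_{H=1}(t)$, which become one-variable polynomial inequalities in $t$. I would then combine this with the sign of the leading coefficient in $R$, or with the location of the vertex of the parabola, to exclude an interior minimum between the two curves.
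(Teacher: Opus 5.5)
Your proposal is correct, and its final fallback for part (iii) is exactly the paper's argument. The numerator of $\partial_t\breve X$ is $-2(1+2t+2t^2)R^2+2t^2(3+4t+t^2-2t^3)R-2t^6$, which is strictly concave in $R$ by its leading coefficient and positive on both boundary curves, where $\partial_t\breve X$ equals $1/((1-t)t^2)$ and $3/((1-t)t)$. So no vertex analysis is needed, and the factorisation attempt would fail anyway, since the discriminant in $R$ is negative at $t=2$ and therefore not a square.

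Parts (i) and (ii) coincide with the paper's direct computations, and you are right that $1/28$ is the value at $t=1/2$, not $t=1$. Your affine-in-$R$ check of the denominator is an equally valid substitute for the paper's identity $t^4-R(2t^2-1)=(t^2-R)^2+R(1-R)$.
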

\begin{proof}
    Direct computation shows that
    \begin{gather*}
        R_{D=1}(t) - R_{H=0}(t) = \frac{(1 - t)t^3(1 + t)^2}{(1 + t(1 - t))(1 + 2t(1 - t^2))}, \qquad R_{H=1}(t) - R_{D=1} = \frac{(1 - t)t^3(1 + t)^2}{(1 + 2t)(1 + t(1 - t))}, \\
        t^2 - R_{H=1}(t) = \frac{(1 - t^2)t^2}{1 + 2t}
    \end{gather*}
    are all positive for all $t \in (0,1)$, demonstrating~\eqref{R_order}. Similarly,
    \[
        R_{H=0}'(t) = \frac{2(2 - t)t^3(1 + t)^2}{(1 + 2t(1 - t^2))},\quad R_{D=1}'(t) = \frac{(3 - t)t^2(1 + t)}{(1 + t(1 - t))^2},\quad R_{H=1}'(t) = \frac{6t^2(1 + t)^2}{(1 + 2t)^2}
    \]
    are all positive as well. The argument for $R_{D=3/7}$ on $(0,1/2)$ is similar, so we omit it.

    The denominator in~\eqref{X_breve} is positive on $(0,1)^2$, since it can be written as
    \[
        t^4 - R(2t^2 - 1) = (t^2 - R)^2 + R(1 - R),
    \]
    where both terms are positive. We next compute
    \[
        \partial_t \breve{X}(t,R) = \frac{-2R^2(1 + 2t(1 + t)) + 2Rt^2(3 + 4t + t^2 - 2t^3) - 2 t^6}{(t^4 - R(2t^2 - 1))^2},
    \]
    where we recognize that the numerator is strictly concave with respect to $R$. Since
    \[
        \partial_t \breve{X}(t,R_{H=0}(t)) = \frac{1}{(1 - t)t^2} > 0, \qquad \text{and} \qquad \partial_t \breve{X}(t,R_{H=1}(t)) = \frac{3}{(1 - t)t} > 0
    \]
    on $(0,1)$, we conclude that $\partial_t \breve{X}$ is positive on $\breve{\Omega}_*$.
\end{proof}

Substituting~\eqref{X_breve} into~\eqref{pencil_variables} and~\eqref{solution_tilde}, we obtain
\begin{equation}
    \label{solution_breve}
    \begin{alignedat}{-1}
        D & = \breve{D}(t,R) \ceq \frac{(R + t^3)(t^4 - R(2t^2 - 1))}{\breve{s}(t,R)}, \qquad & F^2 & = \breve{\Phi}(t,R) \ceq \frac{\parn{t^4-R(2t^2-1)}^2}{(1 + t)\breve{s}(t,R)}, \\
        H & = \breve{H}(t,R) \ceq \frac{R\parn[\big]{(1+2t(1-t^2))R-t^4}}{\breve{s}(t,R)},    & x   & = \breve{x}(t,R) \ceq \frac{t^3(2 + t) - R(1 + 2t)}{t^4 - R(2t^2 - 1)},        \\
        B & = \breve{B}(t,R) \ceq \frac{R\parn{t^4-R(2t^2-1)}}{\breve{s}(t,R)},               & y   & = \breve{y}(t,R) \ceq \frac{(1 + 2t(1 - t^2))R - t^4}{t^4 - R(2t^2 - 1)},
    \end{alignedat}
\end{equation}
where
\[
    \breve{s}(t,R) \ceq t^6(2 + t) - t^3(1 + 3t)R + (1 + 2t(1 - t^2))R^2
\]
is positive on $(0,1)^2$. This can, for instance, be seen from the identity
\[
    2 \breve{s}(t,R) = (1 + 3t)(R - t^3)^2 + R^2(1 - t)^2\parn*{1+3t+5t^2+3t^3+t^4(1-t)} + (1 - R^2)t^6(3 - t).
\]

\subsection{Monotonicity}

From \cref{breve_lemma} \ref{breve_lemma_part_3} we know that $\breve{x}$ and $\breve{y}$ are strictly increasing and decreasing, respectively, with respect to $t$. We will now see that \emph{every} variable is monotone with respect to $t$. It suffices to prove the following lemma.

\begin{lemma}
    \label{D_phi_derivative_sign}
    The signs $\partial_t \breve{\Phi} < 0$ and $\partial_t \parn{\breve{x}^2\breve{\Phi}} > 0$ hold on $\breve{\Omega}_*$.
\end{lemma}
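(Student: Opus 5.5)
The plan is to compute both derivatives directly from the explicit rational formulas~\eqref{solution_breve}, and to reduce each sign claim to the positivity of a single polynomial in $(t,R)$ on $\breve{\Omega}_*$. That positivity would then be verified by the same kind of endpoint argument in $R$ used in the proof of \cref{breve_lemma}.

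Write $P(t,R) \ceq t^4 - R(2t^2-1)$, which is positive on $(0,1)^2$ by \cref{breve_lemma}, and $Q(t,R) \ceq t^3(2+t) - R(1+2t)$. Then~\eqref{solution_breve} gives
\[
    \breve{\Phi} = \frac{P^2}{(1+t)\breve{s}}, \qquad \breve{x}^2\breve{\Phi} = \frac{Q^2}{(1+t)\breve{s}}.
\]
The two quantities therefore have the same structure. Differentiating with respect to $t$ gives
\[
    \partial_t \breve{\Phi} = \frac{P\,N_P}{(1+t)^2\breve{s}^2}, \qquad N_P \ceq 2(1+t)\breve{s}\,\partial_t P - P\,\partial_t\parn*{(1+t)\breve{s}},
\]
and the analogous $N_Q$ for $\breve{x}^2\breve{\Phi}$. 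Since $P>0$ and $\breve{s}>0$ on $(0,1)^2$, it suffices to show $N_P<0$ on $\breve{\Omega}_*$. On this set $Q>0$ as well, because $\breve{x}>0$ there and $P>0$. Hence it also suffices to show $N_Q>0$. Both $N_P$ and $N_Q$ are explicit polynomials of degree at most three in $R$.

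For each of $N_P$ and $N_Q$, I would treat $t\in(0,1)$ as fixed and study the polynomial in $R$ on the interval $[R_{H=0}(t),R_{H=1}(t)]$, whose endpoints are ordered by~\eqref{R_order}. First, I evaluate at the two endpoints. There the formulas collapse: $R=R_{H=0}(t)$ corresponds to $y=0$, and $R=R_{H=1}(t)$ corresponds to $X=-1$, i.e.\ $x=0$. On these curves $\breve{\Phi}$ should reduce to simple one-variable rational functions, and I expect factored expressions with an obvious sign, as in the proof of \cref{breve_lemma}. Second, I would try to establish a convexity or concavity property in $R$ (sign of $\partial_R^2$ or $\partial_R^3$) that transfers the endpoint signs to the whole interval.

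If no such clean shape property is available, the fallback is the substitution
\[
    R = (1-\lambda)R_{H=0}(t) + \lambda R_{H=1}(t), \qquad \lambda\in[0,1].
\]
After clearing the positive denominators $1+2t(1-t^2)$ and $1+2t$, I would expand in the Bernstein basis in $\lambda$ and check that all coefficients, which are polynomials in $t$, have the required sign on $(0,1)$. The identity $\tfrac12(1-x^2)F^2 = D-1$ from~\eqref{bernoulli_free_nondim_3} gives $\breve{x}^2\breve{\Phi} = \breve{\Phi} - 2(\breve{D}-1)$. This can serve as a consistency check, and it shows that the second claim is equivalent to $2\partial_t\breve{D} < \partial_t\breve{\Phi}$.

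The main obstacle is this positivity step: certifying the sign of a bivariate polynomial of moderately high degree on a curved region. The endpoint evaluations should be routine. The real difficulty is excluding an interior sign change in $R$, and I expect to need either a lucky sign for a higher $R$-derivative or the Bernstein-coefficient check above.
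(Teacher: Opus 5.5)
Your reduction is the paper's own. Since $Q=(1+2t)\,(R_{H=1}(t)-R)$, your $N_Q$ is exactly the cubic $\mu$ of the paper, and $N_P=-(t^2-R)\,\nu$ with $\nu$ quadratic in $R$. The endpoint evaluations at $R_{H=0}(t)$ and $R_{H=1}(t)$ are also what the paper does.

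The gap is the step you flag as the main obstacle, and the tool you propose for it does not work for $\mu$.
\begin{itemize}
\item For $\breve{\Phi}$ you are fine once you extract the factor $t^2-R$, which is positive on $\breve{\Omega}_*$ by \eqref{R_order}. The remaining $\nu$ is strictly concave in $R$ and positive at both endpoints, so your concavity idea closes this case.
\item For $\mu$ there is no uniform curvature sign on $[R_{H=0}(t),R_{H=1}(t)]$. At $t=\tfrac12$ the interval is $[\tfrac{1}{28},\tfrac{5}{32}]$, and one checks $\partial_R^2\mu=\tfrac{135}{16}-33R>0$ on all of it, so $\mu$ is strictly convex there. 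Near $(t,R)=(1,1)$, by contrast, $\partial_R^2\mu<0$.
\item $\partial_R^3\mu$ is just the constant $-6(1+2t+6t^2+12t^3+8t^4)$, which by itself transfers nothing from the endpoints.
\item The Bernstein fallback is unproven. Positivity of a cubic on an interval does not force its degree-three Bernstein coefficients to be positive. Even if they happen to be, you are left with two new high-degree positivity problems in $t$ on $(0,1)$.
\end{itemize}

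The missing idea is to use information \emph{outside} the interval and count roots.
\begin{itemize}
\item The constant term gives $\mu(t,0)=-(2+t)t^9<0$.
\item The leading coefficient $-(1+2t+6t^2+12t^3+8t^4)$ is negative, so $\mu\to\pm\infty$ as $R\to\mp\infty$.
\item $\mu>0$ at $R_{H=0}(t)$ and at $R_{H=1}(t)$; these are the factored endpoint values you anticipated.
\end{itemize}
By the intermediate value theorem, $\mu$ has roots in $(-\infty,0)$, in $(0,R_{H=0}(t))$ and in $(R_{H=1}(t),\infty)$. A cubic has no more roots, so $\mu>0$ on $[R_{H=0}(t),R_{H=1}(t)]$, and hence $\partial_t(\breve{x}^2\breve{\Phi})>0$ on $\breve{\Omega}_*$. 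This is how the paper closes the argument, and it needs neither a convexity property nor a Bernstein expansion.
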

\begin{proof}
    We first consider the sign of $\partial_t \breve{\Phi}$, which is the simpler of the two. Explicitly,
    \[
        \partial_t \breve{\Phi}(t,R) = - \frac{(t^2 - R)(t^4 - R(2t^2 - 1))\nu(t,R)}{(1 + t)^2 \breve{s}(t,R)^2},
    \]
    where
    \[
        \nu(t,R) \ceq - t^7(4 + 3t) + 2t^3(3 + 2t)(2 + 3t - 2t^3)R - \brak*{3 +4t(3+t^2(3-t^2))}R^2
    \]
    is strictly concave with respect to $R$ for $t \in (0,1)$. Since
    \[
        \nu(t,R_{H=0}(t)) = \frac{4(2 - t)(1 - t)t^7(1 + t)^5}{(1 - 2t(1 - t^2))^2} > 0, \qquad \nu(t,R_{H=1}(t)) = \frac{12(1 - t)t^6(1 + t)^5}{(1 + 2t)^2} > 0,
    \]
    for all $t \in (0,1)$, we conclude that $\nu > 0$ on $\breve{\Omega}_*$. As the two other factors in the numerator of $\partial_t \breve{\Phi}$ are also positive by \cref{breve_lemma}, we therefore have $\partial_t \breve{\Phi} < 0$ on $\breve{\Omega}_*$.

    Next, one may verify that
    \[
        \partial_t \parn*{\breve{x}^2\breve{\Phi}} = \frac{(1 + 2t)\parn*{R_{H=1}(t) - R}\mu(t,R)}{(1 + t)^2 \breve{s}(t,R)^2},
    \]
    with $\mu$ cubic in $R$. The sign of the first two factors in the numerator are certainly positive on $\breve{\Omega}_*$. The sign of $\partial_t (\breve{x}^2 \breve{\Phi})$ is therefore controlled by that of $\mu$. We do not reproduce the entire expression for $\mu$ here, merely that it is of the form
    \[
        \mu(t,R) = -(2 + t) t^9 + a_1(t) R + a_2(t) R^2 - \parn*{1+2t+6t^2+12t^3+8t^4}R^3.
    \]

    From the expression for $\mu$, we in particular see that $\mu(t,0) < 0$, and that $\mu(t,R) \to \mp \infty$ as $R \to \pm \infty$, for all $t \in (0,1)$. Moreover, it can be verified that
    \begin{align*}
        \mu(t,R_{H=0}(t)) & = \frac{4(1 - t)t^9 (1 + t)^6 (1 + 4t - t^2 - 2t^3)}{(1 + 2t(1 - t^2))^3} > 0, \\
        \mu(t,R_{H=1}(t)) & = \frac{24(1 - t)t^8 (1 + t)^6(2 + t)}{(1 + 2t)^3} > 0
    \end{align*}
    at the boundary. Indeed, the only factor that is not trivially positive satisfies
    \[
        1 + 4t - t^2 - 2t^3 > 1 + 4t - t - 2t = 1 + t
    \]
    for all $t \in (0,1)$. We conclude that $\mu$ has exactly three real roots with respect to $R$ for each $t \in (0,1)$, of which two lie to the left of $R_{H=0}(t)$, and one to the right of $R_{H=1}(t)$. Thus, $\mu(t,R)$ is positive on $\breve{\Omega}_*$.
\end{proof}

\begin{corollary}
    \label{monotonicity_corollary}
    For all $(t,R) \in \breve{\Omega}_*$, we have signs
    \[
        \partial_t \breve{D}, \partial_t \breve{H}, \partial_t \breve{B}, \partial_t \breve{\Phi}, \partial_t (\breve{y}^2 \breve{\Phi}), \partial_t \breve{y} < 0, \quad \text{and}\quad \partial_t (\breve{x}^2 \Phi), \partial_t \breve{x} > 0.
    \]
\end{corollary}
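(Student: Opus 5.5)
The plan is to reduce the six claimed signs to the two established in \cref{D_phi_derivative_sign} and the one in \cref{breve_lemma}~\ref{breve_lemma_part_3}, using the algebraic relations in~\eqref{nondim}, and to do a direct computation only where that reduction fails. Throughout, $R$ is held fixed, so every relation below with constant coefficients in $R$ can be differentiated in $t$ directly.

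First, $\breve{x} = 1 + \breve{X}$ by~\eqref{pencil_variables}. Hence $\partial_t \breve{x} = \partial_t \breve{X} > 0$ by \cref{breve_lemma}~\ref{breve_lemma_part_3}. Next, write~\eqref{bernoulli_free_nondim_3} as
\[
    \breve{D} = 1 + \tfrac{1}{2}\breve{\Phi} - \tfrac{1}{2}\breve{x}^2\breve{\Phi}.
\]
Then $\partial_t \breve{D} = \tfrac12 \partial_t\breve{\Phi} - \tfrac12\partial_t(\breve{x}^2\breve{\Phi}) < 0$ follows immediately from \cref{D_phi_derivative_sign}. The signs $\partial_t\breve{\Phi} < 0$ and $\partial_t(\breve{x}^2\breve{\Phi}) > 0$ are of course just that lemma restated.

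For $\breve{y}$, the relation $y = 1 - tX$ does not give a sign directly, because $X$ changes sign across $\tilde{\Gamma}_{D=1}$. I would therefore differentiate the explicit formula for $\breve{y}$ in~\eqref{solution_breve}, in the same way as in the proof of \cref{breve_lemma}. The denominator of $\partial_t \breve{y}$ is the positive square $(t^4 - R(2t^2-1))^2$. The numerator is a polynomial of low degree in $R$. I expect it to be quadratic, and after a sign change to have a definite concavity. If so, it suffices to evaluate it at $R = R_{H=0}(t)$ and $R = R_{H=1}(t)$ and check that both values are negative, as was done for $\partial_t \breve{X}$ and for $\nu$.

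Once $\partial_t\breve{y} < 0$ is known, the product $\breve{y}^2\breve{\Phi}$ has two positive factors, each strictly decreasing in $t$. So $\partial_t(\breve{y}^2\breve{\Phi}) < 0$ is automatic.

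For $\breve{B}$ and $\breve{H}$, note that $\breve{H} = \breve{y}\breve{B}$ by~\eqref{conservation_of_mass_nondim_y}, with both factors positive. It therefore suffices to prove $\partial_t \breve{B} < 0$, after which $\partial_t\breve{H} = \breve{B}\,\partial_t\breve{y} + \breve{y}\,\partial_t\breve{B} < 0$. To get $\partial_t \breve{B} < 0$, I would first look for a linear combination of~\eqref{D_Bxy} and~\eqref{bernoulli_internal_nondim_3} that expresses $\breve{B}$ through quantities whose monotonicity is already known. Failing that, I would differentiate $\breve{B} = R(t^4 - R(2t^2-1))/\breve{s}$ directly. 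The factor $R$ is constant, so only $\partial_t$ of the remaining quotient matters. Its numerator should factor, much as for $\partial_t\breve{\Phi}$, into the positive factors $(t^2 - R)$ and $(t^4 - R(2t^2-1))$ from \cref{breve_lemma}, times a residual polynomial in $R$. That residual would then be handled by its concavity (or degree) in $R$ together with its boundary values on the curves $R_{H=0}$ and $R_{H=1}$.

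I expect the main obstacle to be the direct computations for $\partial_t\breve{y}$ and $\partial_t\breve{B}$. The difficulty is in extracting a clean factorization and confirming that the residual polynomial cannot change sign inside $\breve{\Omega}_*$. If concavity fails, I would instead count real roots as in the treatment of $\mu$: locate the roots relative to $R_{H=0}(t)$ and $R_{H=1}(t)$ using the sign at $R = 0$, the behaviour as $R \to \pm\infty$, and the sign on the two boundary curves.
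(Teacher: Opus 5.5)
\paragraph{What works.} Your reductions for $\breve x$, $\breve D$, $\breve\Phi$, $\breve x^2\breve\Phi$ and $\breve y^2\breve\Phi$, and the step $\breve H=\breve y\breve B$ once $\breve B$ is known, are exactly the paper's. Your direct computation for $\breve y$ also goes through as you predict. The numerator of $\partial_t\breve y$ is $2R\bigl[(1+2t-t^2+2t^4)R-t^3(4+3t-2t^2-t^3)\bigr]$, which is convex in $R$. The zero of the bracket exceeds $R_{H=1}(t)$ by $2t^3(1-t)(1+t)^4/\bigl((1+2t)(1+2t-t^2+2t^4)\bigr)$, so $\partial_t\breve y<0$ on $\breve\Omega_*$. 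The paper simply cites \cref{breve_lemma}~\ref{breve_lemma_part_3} for this sign. However, $\partial_t\breve y=-\breve X-t\,\partial_t\breve X$ has no evident sign where $\breve X<0$, so your explicit computation is the more careful route there.

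\paragraph{The gap.} The step that fails is the sign of $\partial_t\breve B$, on which your $\breve H$ step also rests. Write $M=t^4-R(2t^2-1)$. Then $\partial_t\breve B=R\,P/\breve s^2$ with $P=(\partial_t M)\,\breve s-M\,\partial_t\breve s$, a cubic in $R$. It is divisible by neither $t^2-R$ nor $M$; for instance $P(t,t^2)=t^6(1-t^2)^2\neq 0$. So the factorization you anticipate does not occur.

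Your fallbacks do not close the argument either. $P$ is convex for small $R$ and concave for large $R$. For $t$ near $1$ (already at $t=0.9$) it is concave on all of $[R_{H=0}(t),R_{H=1}(t)]$, and there negative endpoint values prove nothing. The sign data you list are: positive as $R\to-\infty$, negative at $0$, $R_{H=0}$, $R_{H=1}$, and as $R\to+\infty$. These locate only one of the three roots, so two could still sit inside the interval.

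\paragraph{The fix.} The missing observation is the paper's. In \eqref{solution_breve}, $\breve D=(R+t^3)M/\breve s$ and $\breve B=RM/\breve s$ share the factor $M/\breve s$, so
\[
    \breve B=\frac{R}{R+t^3}\,\breve D .
\]
This is a product of two positive functions that are strictly decreasing in $t$: the first trivially, the second by your own $\breve D$ step. Hence $\partial_t\breve B<0$ with no further computation.

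If you prefer brute force, one extra test point rescues the root count. You have $P(t,t^2)>0$ and
\[
    P\bigl(t,R_{H=1}(t)\bigr)=-\frac{4t^8(1-t)^2(1+t)^4(3+5t+t^2)}{(1+2t)^3}<0.
\]
This forces one root into $(R_{H=1}(t),t^2)$ and one into $(t^2,\infty)$, so none lies in $[0,R_{H=1}(t)]$. That extra point, and the verification of the boundary value, are what your plan lacks.
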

\begin{proof}
    We already have the signs for $\partial_t (\breve{x}^2 \Phi), \partial_t \breve{\Phi}, \partial_t \breve{x}, \partial_t \breve{y}$ from \cref{breve_lemma,D_phi_derivative_sign}, and immediately infer the one for $\partial_t (\breve{y}^2 \Phi)$. Using~\eqref{bernoulli_free_nondim}, we get
    \[
        \partial_t \breve{D} = \tfrac{1}{2}\partial_t \breve{\Phi} - \tfrac{1}{2} \partial_t \parn*{\breve{x}^2 \Phi} < 0,
    \]
    which in turn implies $\partial_t \breve{B} < 0$, since~\eqref{solution_breve} yields that
    \[
        \breve{B}(t,R) = \frac{R}{R + t^3} \breve{D}(t,R)
    \]
    is a product of positive functions with negative derivative. Finally,
    \[
        \breve{H}(t,R) = \breve{y}(t,R) \breve{B}(t,R)
    \]
    gives us the sign of $\partial_t \breve{H}(t,R)$.
\end{proof}

As a result of \cref{monotonicity_corollary}, $\Omega_*$ may equally well be parameterized by \emph{any} combination of $R$ with another variable. In particular, $\Omega_*$ is the graph of a smooth function of $(H,R) \in (0,1)^2$, monotone in $H$, notably proving a claim from~\cite{dias_01_interfacial}. The price one has to pay is that such a parameterization can no longer be written down explicitly, since it requires solving a septic equation.

Furthermore, we can use \cref{monotonicity_corollary} to promptly obtain bounds on the nontrivial conjugate flows. They can only be improved if the bounds are allowed to depend on $R$. Note that the upper bound on $\mo{F}_1^2$ on $\Omega_{D>1}$ is \emph{slightly} larger than $1$.

\begin{corollary}
    The following bounds are sharp.
    \begin{gather*}
        1 < D < \tfrac{3}{2}, \quad \tfrac{4}{9} < H,B < 1, \quad 0 < F^2 < 1, \quad 0 < \mo{F}_1^2 < \tfrac{1}{4}(11 \sqrt{33} - 59), \quad 0 < \mo{F}_2^2 < \tfrac{1}{2}, \\
        0 < x < 1, \quad 1 < y < 2, \quad 0 < R < 1
        \shortintertext{on $\Omega_{D>1}$, while}
        0 < D < 1, \quad 0 < H,B < \tfrac{1}{2}, \quad 0 < F^2 < \tfrac{1}{2}, \quad 0 < \mo{F}_1^2 < \tfrac{1}{2}, \quad 0 < \mo{F}_2^2 < 2,                               \\
        1 < x < \infty, \quad 0 < y < 1, \quad 0 < R < 1
    \end{gather*}
    on $\Omega_{D<1}$.
\end{corollary}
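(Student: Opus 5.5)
Fix $R\in(0,1)$ and work in the $(t,R)$ coordinates of~\eqref{solution_breve}. By \cref{breve_lemma}, $R_{H=0}$, $R_{D=1}$ and $R_{H=1}$ are strictly increasing bijections of $[0,1]$. The slice $\set{t : (t,R)\in\breve{\Omega}_*}$ is therefore an open interval $(t_{H=1}(R),t_{H=0}(R))$, where $t_\bullet \ceq R_\bullet^{-1}$, and it is split by $t_{D=1}(R)$. Since $D$ is strictly decreasing in $t$ (\cref{monotonicity_corollary}), $\Omega_{D>1}$ corresponds to $t\in(t_{H=1}(R),t_{D=1}(R))$ and $\Omega_{D<1}$ to $t\in(t_{D=1}(R),t_{H=0}(R))$.

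By \cref{monotonicity_corollary}, each of $D,H,B,F^2,\mo{F}_1^2=y^2F^2,\mo{F}_2^2=x^2F^2,x,y$ is strictly monotone on each slice. Its infimum and supremum over the slice are therefore the values at the two endpoints, which are not attained. Taking the union over $R$, each bound reduces to the range of a single function of $t\in(0,1)$ along the relevant boundary curves: $\tilde\Gamma_{H=1}$ ($X=-1$), $\tilde\Gamma_{D=1}$ ($X=0$), and $\tilde\Gamma_{H=0}$ ($tX=1$). These curves are easiest to handle in the $(t,X)$ coordinates of~\eqref{solution_tilde}. Since the endpoint values are never attained, the strict inequalities follow. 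Sharpness follows because every value in the range along a boundary curve is approached from inside $\Omega$.

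\textbf{Component $\Omega_{D>1}$.} On $X=-1$ we have $\tilde s(t,-1)=2+t$, hence
\[
    D=\frac{3}{2+t},\qquad x=0,\qquad y=1+t,\qquad F^2=\frac{2(1-t)}{2+t}.
\]
On $X=0$ we have $D=1$, $x=y=1$, $H=B=1/((2-t)(1+t))$ and $F^2=(1-t)/(2-t)$. Then:
\begin{enumerate}[(i)]
    \item $\sup D=\tfrac32$ is approached as $t\to0$ on $X=-1$.
    \item $\inf H=\inf B=\tfrac49$ is the minimum of $1/((2-t)(1+t))$ at $t=\tfrac12$ on $X=0$.
    \item $H,B\to1$ on $X=-1$.
    \item $F^2<1$ comes from $t\to0$ on $X=-1$.
    \item $1<y<2$ and $0<x<1$ are read off directly.
    \item $\sup\mo{F}_2^2=\tfrac12$ comes from $x^2F^2$ on $X=0$ as $t\to0$.
\end{enumerate}
The nontrivial constant is the supremum of $\mo{F}_1^2=y^2F^2$ on $X=-1$, namely of $g(t)\ceq 2(1+t)^2(1-t)/(2+t)$ on $(0,1)$. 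I would set $g'=0$, which after clearing denominators gives a quadratic in $t$ whose admissible root involves $\sqrt{33}$. Substituting that root back should produce $\tfrac14(11\sqrt{33}-59)$; I would verify this symbolically, and also check that it exceeds the value $1$ reached on the $X=0$ end.

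\textbf{Component $\Omega_{D<1}$.} The endpoints are now $X=0$ and $tX=1$. On $tX=1$ we have $y=0$, $H=0$, $B=1/((1+t)\tilde s)$ with $\tilde s(t,1/t)=1/t+2-2t^2$, and $F^2=2(1-t)/\tilde s$.
\begin{enumerate}[(i)]
    \item The suprema $H,B,F^2<\tfrac12$ come from $X=0$ as $t\to0$, where $1/((2-t)(1+t))$ and $(1-t)/(2-t)$ tend to $\tfrac12$. I would check these are their maxima there, or at least that the values on $tX=1$ are smaller.
    \item $\mo{F}_2^2<2$ and $\mo{F}_1^2<\tfrac12$ are treated the same way, taking $x^2F^2$ and $y^2F^2$ at the appropriate ends and letting $t\to0$ or $t\to1$ along $tX=1$ (where $x=1+1/t$).
    \item $x\to\infty$ occurs near the corner $(t,X)\to(0,\infty)$, and $0<R<1$ is immediate.
\end{enumerate}

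The main obstacle is bookkeeping rather than theory. For each quantity one must identify which endpoint of each slice gives the sup and which the inf, using the signs in \cref{monotonicity_corollary}. One must then compute the extremum of a rational function of one variable along the boundary curves, including the degenerate corners $t\to0,1$ where the curves meet $\breve\Gamma_{R=0}$ and $\breve\Gamma_{R=1}$. The $\sqrt{33}$ maximization is the only step needing genuine calculus. I expect the remaining bounds to follow from elementary monotonicity of explicit one-variable functions.
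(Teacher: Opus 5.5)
Your strategy is the one the paper intends; the paper only says the bounds follow from \cref{monotonicity_corollary}. On each open $t$-slice at fixed $R$ every quantity is strictly monotone, so its extremes over a component reduce to one-variable extrema along $X=-1$, $X=0$ and $tX=1$. Your $\Omega_{D>1}$ computations are correct, including $2t^2+5t-1=0$, $t_*=(\sqrt{33}-5)/4$ and $g(t_*)=\tfrac14(11\sqrt{33}-59)$. This value exceeds $g(0)=1$ and $g(1)=0$. There are two small slips:
\begin{enumerate}[(a)]
\item On $X=-1$ only $H\equiv 1$, while $B=1/(1+t)$, so $\sup B=1$ is approached only as $t\to 0$.
\item The value $1$ you want to compare $g(t_*)$ with is $g(0)$, not anything on $X=0$. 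On $X=0$ one has $\mo{F}_1^2=(1-t)/(2-t)<\tfrac12$, and that end is irrelevant anyway, since $y^2F^2$ decreases in $t$.
\end{enumerate}

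The genuine problem is on $\Omega_{D<1}$: your value $\tilde{s}(t,1/t)=1/t+2-2t^2$ is wrong. Substituting $X=1/t$ into $\tilde{s}$ gives $\tilde{s}(t,1/t)=(1+2t-2t^3)/t^2$. The paper's display $\tilde{s}(1/X,X)=X+2(1-1/X^2)$ is missing a factor of $X$; this is harmless for the positivity claim made there, but not for your purposes. With your formula, two of the claimed bounds would fail:
\begin{itemize}
\item $\mo{F}_2^2=x^2F^2$ on $tX=1$ would equal $2(1+t)^2(1-t)/\bigl(t(1+2t-2t^3)\bigr)$. This blows up as $t\to 0$ and contradicts $\mo{F}_2^2<2$.
\item $D$ there would be $(1+2t-2t^2)/(1+2t-2t^3)$, bounded away from $0$, contradicting sharpness of $D>0$.
\end{itemize}
With the correct value, on $tX=1$ one has:
\begin{itemize}
\item $D=t(1+2t-2t^2)/(1+2t-2t^3)$, which tends to $0$;
\item $F^2=2t^2(1-t)/(1+2t-2t^3)$;
\item $B=t^2/\bigl((1+t)(1+2t-2t^3)\bigr)$;
\item $\mo{F}_2^2=2(1+t)^2(1-t)/(1+2t-2t^3)$. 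This tends to $2$ as $t\to 0$ and stays below $2$, because $(1+2t-2t^3)-(1+t)^2(1-t)=t(1+t-t^2)>0$ on $(0,1)$.
\end{itemize}

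Finally, the comparison with $tX=1$ in your item (i) for $\Omega_{D<1}$ is unnecessary. Since $H$, $B$ and $F^2$ decrease in $t$, their suprema on $\Omega_{D<1}$ come from $X=0$ alone. All you need there is $\sup_{(0,1)}1/\bigl((2-t)(1+t)\bigr)=\sup_{(0,1)}(1-t)/(2-t)=\tfrac12$.
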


\section{Uniqueness of the special family}
\label{sec uniqueness}

We end by proving the uniqueness of the solutions from \cref{special_section}.

\begin{theorem}
    \label{special_uniqueness}
    The choice $(D,F^2,x) = (3/7,1/7,3)$ from \cref{special_section} is the only selection of two or more constant variables from $D,H,B,R,F^2,x,y$ that leads to a family of solutions on $\Omega$.
\end{theorem}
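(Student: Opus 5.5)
The plan is to read ``a family of solutions on $\Omega$ with two constant variables'' as a curve in $\Omega$ along which the two chosen variables are both constant. There are $\binom{7}{2}=21$ pairs to consider. For each pair, we either reduce it to a case handled by monotonicity, or restrict the parameterization~\eqref{parameterization_hat} to an explicit curve and show that the remaining function is nonconstant there. In what follows, $\Omega$ is identified with $\hat{\Omega}$ via~\eqref{parameterization_hat}.

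\emph{Step 1: pairs involving $R$.} In the $(t,R)$ parameterization, \cref{monotonicity_corollary} shows that for fixed $R$ each of $D,H,B,F^2,x,y$ is strictly monotone in $t$. So fixing $R$ together with any other variable leaves at most one point of $\breve{\Omega}_*$, and hence of $\Omega$. This disposes of $6$ pairs.

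\emph{Step 2: pairs involving $x$ or $y$.} Fixing both $x$ and $y$ gives a single point. If $x=x_0\neq 1$ is fixed, we need $\partial_y$ of the other variable to vanish identically in $y$.
\begin{itemize}
    \item For $\hat{D}$ and $\hat{\Phi}$, \eqref{D_hat_Phi_hat_derivatives} shows this happens only at $x_0=3$. This produces exactly the special family with $(D,F^2,x)=(3/7,1/7,3)$.
    \item For $\hat{H}$, $\hat{B}$ and $\hat{R}$, I would compute $\partial_y$ at fixed $x$. It should suffice to note that the $y$-dependence of each rational function is genuinely non-constant for every $x_0\neq1$. For example, $\hat{B}(x_0,\cdot)$ has a pole at $y=x_0$ coming from the factor $(x-y)$, and $\hat{R}(x_0,\cdot)$ vanishes to third order at $y=1$.
\end{itemize}
Symmetrically, fixing $y=y_0\neq1$, I would check in $x$ that none of $\hat{D},\hat{H},\hat{B},\hat{\Phi}$ is constant. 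For instance, $\hat{H}/\hat{B}=y$ shows $\hat{H}$ and $\hat{B}$ are nonconstant simultaneously, and both have a pole at $x=y_0$. Similarly $\hat{\Phi}(\cdot,y_0)$ has the zero $x=2-y_0$ but is not identically zero.

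\emph{Step 3: pairs among $D,H,B,F^2$.} Here the linear relations~\eqref{HDF_Bxy} and~\eqref{bernoulli_free_nondim_3} reduce most cases to Step 2.
\begin{itemize}
    \item Fixing $D$ and $F^2$ forces $x^2=1-2(D-1)/F^2$ to be constant, so we are back in the $x$-constant case.
    \item Fixing $H$ and $B$ forces $y=H/B$ to be constant.
    \item Fixing $D$ and $B$, \eqref{D_Bxy} makes $y=x-(xD-1)/B$ an affine function of $x$. Fixing $H$ and $D$ works the same way, using $B=H/y$.
    \item Fixing $B$ and $F^2$ (or $H$ and $F^2$), \eqref{F_Bxy} gives an explicit algebraic curve in $(x,y)$.
\end{itemize}
In these last four cases I would substitute the curve into the remaining constraint among~\eqref{parameterization_hat}. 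An example is $\hat{B}(x,\,x-(xD-1)/B)=B$ with $D,B$ fixed. After clearing denominators this is a polynomial identity in $x$. I would then show it cannot hold identically: compare leading coefficients as $x\to\infty$ or $x\to 1$, or observe that it forces a contradiction with~\eqref{nontrivial_inequalities} or with $(x,y)\in\hat{\Omega}$.

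I expect Step 3 to be the main obstacle, specifically the cases involving $F^2$ with $H$ or $B$, where the curve is not a line. The fallback there is the Jacobian criterion. Along any such family the gradients of the two functions are parallel, so the numerator of $\partial_x f\,\partial_y g-\partial_y f\,\partial_x g$ must vanish on a curve in $\hat{\Omega}$. I would factor this polynomial and check that no factor's zero set meeting $\hat{\Omega}$ is a level set of $f$, apart from the line $x=3$ already found.
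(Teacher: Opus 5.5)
Your plan follows essentially the paper's proof:
\begin{itemize}
\item pairs with $R$ are handled by \cref{monotonicity_corollary};
\item pairs with $x$ or $y$ are handled by inspecting~\eqref{parameterization_hat} in one variable (the paper counts degrees, plus uses the pole at $y=x$ for $(H,x)$, much as you do);
\item $(H,B)$ and $(D,F^2)$ are handled by the linear relations;
\item the remaining four pairs are handled by eliminating $y$ to get a polynomial identity in $x$. Your substitution of the curve from~\eqref{D_Bxy} or~\eqref{F_Bxy} is equivalent to the paper's $\res_y$.
\end{itemize}

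Those curves are all graphs over $x$; for $(B,F^2)$, $y$ is even a polynomial in $x$. So the $F^2$ cases are no harder than the others, and the Jacobian fallback is unnecessary.

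Step~3 is still only a plan, and its main tactic fails in two cases. The identity \emph{does} hold identically for $(D,B)=(1,\tfrac12)$ and for $(B,F^2)=(\tfrac12,0)$, both along the line $x+y=2$, i.e.\ $\hat\Gamma_{R=1}$. There you cannot argue ``it cannot hold identically.'' You must use your fallback instead: $D\neq 1$ by \cref{trivial_lemma} in the first case, and $F^2>0$ in the second.
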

\begin{proof}
    From~\eqref{parameterization_hat}, we deduce that no such pair can involve $y$, simply based on the non-matching degrees of the numerators and denominators with respect to $x$. The same is true for $x$ combined with either $B$ or $R$, leaving $D$, $H$, and $F^2$. The pair $(H,x)$ can be excluded because $y = x$ is always a root of the denominator of $\hat{H}$ in~\eqref{DHPhi_hat}, but never of the numerator. Meanwhile,~\eqref{D_hat_Phi_hat_derivatives} shows that $x = 3$ is the only choice leading to constant $D$ or $\Phi$, leading to the triple in the hypothesis.

    Not only are the pairs $(R,x)$ and $(R,y)$ excluded, we can use \cref{monotonicity_corollary} to exclude \emph{any} pair involving $R$. Furthermore, the pair $(H,B)$ is excluded because $y$ would necessarily be constant by~\eqref{conservation_of_mass_nondim_y}, and the pair $(D,F^2)$ leads to the existing triple by~\eqref{bernoulli_free_nondim}. This leaves only the potential candidates $(D,H)$, $(D,B)$, $(H,F^2)$, and $(B,F^2)$, for which we need to work slightly harder. We will use the following idea.

    Suppose that $f_1 = p_1/q_1$ and $f_2 = p_2/q_2$ are rational functions of $(x,y)$, and that $\alpha,\beta \in \R$ are such that
    \[
        f_1 = \alpha, \qquad f_2 = \beta,
    \]
    along some non-vertical curve $\tau \mapsto (x(\tau),y(\tau))$. In particular
    \[
        \res_y(p_1 - \alpha q_1, p_2 - \beta q_2)(x(\tau)) = 0
    \]
    for all $\tau$, whence
    \[
        \res_y(p_1 - \alpha q_1, p_2 - \beta q_2) \equiv 0
    \]
    by non-verticality. This is equivalent to a polynomial system of equations in $(\alpha,\beta)$.

    For the pair $(D,H)$, the resultant, using~\eqref{parameterization_hat}, reads
    \[
        (x - 3)(x - 1)^2(x + 1)\brak*{(3-\alpha)(\beta-1) +(\ldots)x +(1+\alpha-\alpha^2-3\beta+\alpha\beta)x^2 +(\ldots)x^3 +\alpha^2 x^4},
    \]
    where coefficients for $x^0$ and $x^4$ in the bracket force $(\alpha,\beta) = (0,1)$. Not only is this not physical, it also leads to a nonzero coefficient for $x^2$, and therefore no solution.

    For $(D,B)$, the resultant reads
    \[
        -(x - 3)(x - 1)^2(x + 1)\brak*{(\ldots)+ (2-2\alpha-\alpha^2+2\alpha \beta)x + (\ldots)x^2 + \alpha(\alpha-2\beta)x^3},
    \]
    where the only potentially physical choice for $x^3$ is $\alpha = 2\beta$. The coefficient for $x^1$ then reduces to $2(1 - 2\beta)$, so $(\alpha,\beta) = (1,1/2)$ is the only possibility. The other coefficients do in fact also vanish, however this choice cannot lead to solutions on $\Omega$, due to \cref{trivial_lemma}. In fact, the system $p_1 - q_1 = 0$ and $p_2 - \tfrac{1}{2}q_2 = 0$ has solution set $\hat{\Gamma}_{R=1}$ in the first quadrant.

    Moving on to $(H,F^2)$, we find
    \[
        (x - 3)(x - 1)^2 \brak*{(\ldots) +(4-\beta(\ldots))x + (\ldots)x^2 +(\ldots)x^3+\beta^2 x^5},
    \]
    where the coefficient for $x^5$ necessitates $\beta = 0$. In addition to being stationary, the coefficient in front of $x^1$ does not vanish. Hence, there are no solutions.

    Finally, we get
    \[
        -(x - 3)(x - 1)^2 \brak*{2(2-4\alpha+\beta+\alpha \beta) + \beta(\ldots) x + \beta (\ldots)x^2 + 3\beta^2 x^3 + \beta^2 x^4}
    \]
    for $(B,F^2)$, and therefore $(\alpha,\beta) = (1/2,0)$, which does not lead to solutions on $\Omega$. Indeed, $p_1 - \tfrac{1}{2} q_1 = 0$ and $q_1 = 0$ yield the same solutions as for $(D,B)$.
\end{proof}
\begin{remark}
    The result remains true even with the additional variables $\mo{F}_1^2$, $\mo{F}_2^2$ and $D - B$; with $\mo{F}_2^2 = 9/7$ on the same family. For the sake of simplicity, we chose to establish only the weaker version here.
\end{remark}

\section*{Acknowledgements}

Annie Grace Wang was supported by the Summer Undergraduate Research Award (SURA) from the University of Pittsburgh and by the National Science Foundation under Grant No.\ 2015553 from the AMS.

\printbibliography[title=References]

\end{document}